% This is samplepaper.tex, a sample chapter demonstrating the
% LLNCS macro package for Springer Computer Science proceedings;
% Version 2.20 of 2017/10/04
%
\documentclass[runningheads]{llncs}
\usepackage{graphicx}
\usepackage{amsmath}
\usepackage{amssymb}
% Used for displaying a sample figure. If possible, figure files should
% be included in EPS format.
%
% If you use the hyperref package, please uncomment the following line
% to display URLs in blue roman font according to Springer's eBook style:
% \renewcommand\UrlFont{\color{blue}\rmfamily}

\begin{document}
\title{When Can $l_p$-norm Objective Functions be Minimized via Graph Cuts?}
\titlerunning{$l_p$-norm Minimization by Graph Cuts}
% If the paper title is too long for the running head, you can set
% an abbreviated paper title here
%
\author{Filip Malmberg\and Robin Strand}
\authorrunning{ F. Malmberg \and R. Strand}
% First names are abbreviated in the running head.
% If there are more than two authors, 'et al.' is used.
%

\institute{Department of Information Technology, Uppsala University, Sweden\\
\email{filip.malmberg@it.uu.se}, \email{robin.strand@it.uu.se}}

\maketitle              % typeset the header of the contribution
\begin{abstract}
Techniques based on minimal graph cuts have become  a standard tool for solving combinatorial optimization problems arising in image processing and computer vision applications. These techniques can be used to minimize objective functions written as the sum of a set of unary and pairwise terms, provided that the objective function is \emph{submodular}. This can be interpreted as minimizing the $l_1$-norm of the vector containing all pairwise and unary terms. By raising each term to a power $p$, the same technique can also be used to minimize the $l_p$-norm of the vector. Unfortunately, the submodularity of an $l_1$-norm objective function does not guarantee the submodularity of the corresponding $l_p$-norm objective function. The contribution of this paper is to provide useful conditions under which an $l_p$-norm objective function is submodular for all $p\geq 1$, thereby identifying a large class of $l_p$-norm objective functions that can be minimized via minimal graph cuts.
%15--250 words.

\keywords{Minimal graph cuts \and $l_p$ norm \and Submodularity}
\end{abstract}
\section{Introduction}

Many fundamental problems in image processing and computer vision, such as image filtering, segmentation, registration, and stereo vision,  can naturally be formulated as optimization problems~\cite{kolmogorov2004energy}. Often, these optimization problems can be described as \emph{labeling} problems, in which we wish to assign to each image element (pixel) an element from some finite set of labels. The interpretation of these labels depend on the optimization problem at hand. In image segmentation, the labels might indicate object categories. In registration and stereo disparity problems the labels represent correspondences between images, and  in image reconstruction and filtering the labels represent intensities in the filtered image. We seek a label assignment configuration $\mathbf{x}$ that minimizes a given objective function $E$, which in the ``canonical'' case can be written as follows: 

\begin{equation}
\label{eq:canonical}
E( \mathbf{x} )  = \sum_{i\in \mathcal{V}} \phi_i(x_i)+\sum_{i,j\in \mathcal{E}} \phi_{ij}(x_i,x_j)  \; .
\end{equation}

\noindent In Eq.~\ref{eq:canonical} above, $\mathcal{G}=(\mathcal{V}, \mathcal{E})$ is an undirected graph and $x_i$ denotes the label of vertex $i\in \mathcal{V}$ which must belong to a finite set of integers $\{0,1\ldots, K-1\}$. We assume that both the \emph{unary} terms $\phi_i(\cdot)$ and the \emph{pairwise} terms $\phi_{ij}(\cdot,\cdot)$ are non-negative for all $i,j$. We seek a labeling $\mathbf{x}=(x_1, \ldots, x_{|\mathcal{V}|})$ for which $E(\mathbf{x})$ is minimal. 

Finding a globally optimal solution to the labeling problem described above is NP-hard in the general case~\cite{kolmogorov2004energy}, but there are classes of objective functions for which efficient algorithms exist. Specifically, for the binary labeling problem, with $K=2$, a globally optimal solution can be computed by solving a max-flow/min-cut problem on a suitably constructed graph, provided that all pairwise terms are \emph{submodular}~\cite{boykov2001fast,kolmogorov2004energy}. A pairwise term $\phi_{ij}$ is said to be submodular if

\begin{equation}
 \phi_{ij}(0,0)+\phi_{ij}(1,1)\leq \phi_{ij}(0,1)+\phi_{ij}(1,0) \; .
\end{equation}

For $K>2$, the optimization problem cannot in general be solved directly via graph cuts. The multi-label problem can, however, be reduced to a sequence of binary valued labeling problems using, e.g., the \emph{expansion move} algorithm proposed by Boykov et al.~\cite{boykov2001fast}. The output of the expansion move algorithm is a labeling that is locally optimal in a strong sense, and that is guaranteed to lie within a multiplicative factor of the global minimum~\cite{boykov2001fast,kolmogorov2004energy}. With this in mind, we here restrict our attention to the binary label case, i.e., $K=2$.

Looking again at the labeling problem described above, we can view the objective function in Eq.~\ref{eq:canonical} as consisting of two parts:

\begin{itemize}
\item A \emph{local} error measure, in our case defined by the unary and pairwise terms.
\item A \emph{global} error measure, aggregating the local errors into a final score. In the case of Eq.~\ref{eq:canonical}, the global error measure is obtained by summing all the local error measures.
\end{itemize} 

The choice of global error measure determines how local errors will be distributed in the optimal solution. Since we assume all terms to be non-negative, minimizing $E$ can be seen as minimizing the $l_1$-norm of the vector containing all unary and pairwise terms. Here, we consider the generalization of this result to arbitrary $l_p$-norms, $p\geq 1$, and thus seek to minimize

\begin{equation}
\label{eq_pnorm_problem}
\left(\sum_{i\in \mathcal{V}} \phi_i^p(x_i)+\sum_{i,j\in \mathcal{E}} \phi^p_{ij}(x_i,x_j) \right)^{1/p} \; ,
\end{equation}

\noindent where $\phi_i^p(\cdot)=(\phi_i(\cdot))^p$ and $\phi^p_{ij}(\cdot,\cdot)=(\phi_{ij}(\cdot,\cdot))^p$. The value $p$ can be seen as a parameter controlling the balance between minimizing the overall cost versus minimizing the magnitude of the individual terms. For $p=1$, the optimal labeling may contain arbitrarily large individual terms as long as the sum of the terms is small. As $p$ increases, a larger penalty is assigned to solutions containing large individual terms. In the limit as $p$ goes to infinity, the global error measure will approach the $L_\infty$-norm, or max-norm, of the vector of local error measures. A labeling that minimizes Eq.~\ref{eq_pnorm_problem} with $p$ approaching infinity is a \emph{strict minimizer} in the sense of Levi and Zorin~\cite{levi2014strict}. 

It is easily seen that minimizing Eq.~\ref{eq_pnorm_problem} is equivalent to minimizing 

\begin{equation}
\sum_{i\in \mathcal{V}} \phi_i^p(x_i)+\sum_{i,j\in \mathcal{E}} \phi^p_{ij}(x_i,x_j) \; ,
\end{equation}

\noindent i.e., minimizing the sum of all unary and pairwise terms raised to the power $p$. Again, this labeling problem can be solved using minimal graph cuts, provided that all pairwise terms $\phi^p_{ij}$ are submodular. Unfortunately, submodularity of $\phi_{ij}$ does not in general imply submodularity of  $\phi^p_{ij}$.\footnote{As a counterexample, consider the two-label pairwise term $\phi$ given by $\phi(0,0)=3$, $\phi(1,1)=0$, and $\phi(0,1)=\phi (1,0)=2$. It is easily verified that $\phi$ is submodular, while $\phi^2$ is not. }

The contribution of this paper is to provide useful conditions under which $\phi^p_{ij}$ is submodular. Specifically, we show that if $\phi_{ij}$ is submodular and 

\begin{equation}
\max(\phi_{ij}(0,0),\phi_{ij}(1,1)) \leq \max(\phi_{ij}(1,0),\phi_{ij}(0,1)) \; ,
\end{equation}
\noindent then $\phi^p_{ij}$ is submodular for all $p\geq 1$. 

\section{Related Work}
Several authors have considered the use of graph cuts for solving $l_p$-norm optimization problems in image processing, mainly in the context of image segmentation.  In this application, a cut is usually computed directly on a \emph{pixel adjacency graph} -- a graph whose vertex set is the image pixels and where adjacent pixels are connected by weighted edges -- augmented with two vertices ($s$ and $t$) representing object and background labels~\cite{boykov2006graph}. Compared to the objective function given in Eq.~\ref{eq:canonical}, this case only covers a simplified form of the pairwise terms: A fixed penalty is given when two adjacent pixels are assigned different labels, and zero penalty is assigned if the labels are the same. In this simplified case, the issue of submodularity is not important: To optimize the $l_p$ norm of the cut, one may simply raise all \emph{edge weights} to the power $p$ and compute cut as usual. For more general optimization problems however, a pairwise term may assign different penalties to all possible label configurations (for $K$ labels, there are $K^2$ possible label configurations for each pairwise term). This flexibility in assigning the penalties is important in many applications, e.g., stereo reconstruction and image registration.  

All\`{e}ne et al.~\cite{allene2010some} established links relating minimal graph cuts to optimal spanning forests, showing that when the power of the weights of the graph is above a certain number, the cut minimizing the graph cuts energy is a cut by maximum spanning forest. Similar results were independently derived by Miranda et al.~\cite{miranda2009links}. Couprie et al. showed that both methods are instances of an even more general segmentation framework, which they refer to as \emph{power watersheds}~\cite{couprie2011power}. These interesting results all point to the choice of $l_p$-norm being a potentially important hyper-parameter to tune for optimization problems occurring in image analysis and computer vision. The results presented here facilitates the use of minimal graph cuts for solving more general $l_p$-norm problems, beyond the simplified case commonly considered in segmentation applications. 

\section{Conditions for the Submodularity of $\phi^p$}
This section presents our main result; conditions for the submodularity of $\phi^p$. We start by establishing a lemma that is central to the definition of this condition. 

\begin{lemma}
\label{lemma:mainresult}
Let $a,b,c,d,p  \in \mathbb{R}$, with $p> 1$ and $a,b,c,d \geq 0$. If $a+b\leq c+d$ and $\max(a,b) \leq \max(c,d)$ then $a^p+b^p\leq c^p+d^p$.
\end{lemma}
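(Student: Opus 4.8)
The plan is to reduce the four-variable statement to a one-variable calculus problem by exploiting a rescaling/normalization trick and then studying monotonicity. First I would dispose of trivial cases: if $c + d = 0$ then all of $a,b,c,d$ vanish and the inequality is an equality; so assume $c + d > 0$. Without loss of generality take $c \geq d$, so that $\max(c,d) = c$; the hypothesis then reads $a + b \leq c + d$ and $\max(a,b) \leq c$.

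The key step is to fix the sum $s = a + b$ and ask how large $a^p + b^p$ can be subject to $a, b \geq 0$, $a + b = s$, and $\max(a,b) \leq c$. Since $t \mapsto t^p$ is convex for $p > 1$, the function $(a,b) \mapsto a^p + b^p$ on the segment $\{a + b = s,\ 0 \le a,b\}$ is maximized at the endpoints, i.e. when one of $a,b$ is as large as possible. Under the constraint $\max(a,b)\le c$, the extreme admissible configuration is $(a,b) = (\min(s,c),\, s - \min(s,c))$ (up to swapping). Hence it suffices to prove the inequality in this worst case, and moreover one checks that $a^p + b^p$ is nondecreasing in $s$ on this worst-case curve, so it further suffices to treat $s = c + d$, i.e. to assume $a + b = c + d$ with $c \ge d \ge 0$ and $a \le c$ (which forces $a \ge d$ as well, by $a = c + d - b \ge c + d - c = d$ when $b\le c$; and indeed $b = c+d-a \le c$ since $a\ge d$). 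So the whole statement collapses to: for $c \ge a \ge d \ge 0$ and $b = c + d - a$, show $a^p + b^p \le c^p + d^p$.

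Now I would finish with a one-variable argument. Define $f(a) = a^p + (c + d - a)^p$ for $a \in [d, c]$. Then $f'(a) = p\,a^{p-1} - p\,(c+d-a)^{p-1}$, which is $\le 0$ on $[d, (c+d)/2]$ and $\ge 0$ on $[(c+d)/2, c]$; so $f$ is largest at the endpoints $a = d$ or $a = c$, and at both endpoints $f(a) = c^p + d^p$. Therefore $f(a) \le c^p + d^p$ for all admissible $a$, which is the desired conclusion. (Equivalently, this is the standard majorization fact that $(c,d) \succ (a,b)$ when $a + b = c + d$ and $[a,b] \subseteq [d,c]$, combined with Karamata's inequality for the convex function $t \mapsto t^p$; I would probably present the elementary $f'$ argument to keep the paper self-contained.)

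The main obstacle is the reduction from the inequality hypothesis $a + b \le c + d$ to the equality case $a + b = c + d$: one must argue that enlarging the sum $s$ toward $c + d$ (while keeping $\max(a,b) \le c$) only increases $a^p + b^p$, so that the original hypothesis is handled by the boundary case. This needs a short monotonicity check — essentially that scaling up the larger coordinate, or adding mass to it, increases the $\ell_p$ quantity — and care that the constraint $\max(a,b) \le \max(c,d)$ is preserved throughout. Once that reduction is in place, the convexity/endpoint argument is routine.
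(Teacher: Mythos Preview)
Your proposal is correct and follows essentially the same route as the paper: reduce to the equality case $a+b=c+d$ and then analyze the one-variable function $f(x)=x^p+(c+d-x)^p$, whose extremes on the relevant interval occur at the endpoints and equal $c^p+d^p$. The only minor difference is in the reduction step: the paper assumes $a\ge b$, $c\ge d$, disposes of the easy case $b<d$ directly (since then $a^p\le c^p$ and $b^p<d^p$), and in the remaining case $b\ge d$ simply replaces $a$ by $A=c+d-b\in[a,c]$, which is a bit more direct than your worst-case/monotonicity-in-$s$ argument (and in fact makes your convexity step and the $f$ analysis the same observation, so your write-up is slightly redundant).
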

\begin{proof}
Showing that $a^p+b^p\leq c^p+d^p$ is equivalent to showing that $a^p+b^p- c^p-d^p\leq 0$. We assume, without loss of generality, that $a \geq b$ and $c \geq d$ so that $\max(a,b) =a$ and $\max(c,d) =c$.

If $b< d$ then $b^p < d^p$ and, since also $a^p\leq c^p$, the lemma trivially holds. For the remainder of the proof, we will therefore assume that $b\geq d$. It then holds that $c\geq a \geq b \geq d$. 

If $c=0$, then also $a=b=c=0$ and the lemma holds. For the remainder of the proof, we will therefore assume that $c>0$.

If $a+b<c+d$, then $c+d-a-b >0$ and so $a<a+(c+d-a-b)=c+d-b$. Let $A=c+d-b$. Since $d-b\leq0$, it holds that $c \geq A$. Thus the numbers $A, b,c,d$ satisfy the conditions given in the lemma:  $A+b= c+d$ and $\max(A,b) \leq \max(c,d)$. Since $a^p+b^p \leq A^p+b^p$ it follows that if the lemma holds for $A,b,c,d$ then it also holds for $a,b,c,d$. For the remainder of the proof, we will therefore assume that $a+b=c+d$. It follows that $b=c+d-a$ and so 

\begin{equation}
\label{eq:checkpoint}
a^p+b^p- c^p-d^p = a^p+(c+d-a)^p - c^p-d^p \; .
\end{equation}

From the assumption $a\geq b$, it follows that $(c+d)/2\leq a \leq c$. Let 

\begin{equation}
f(x) =x^p +(c+d -x)^p
\end{equation}

\noindent  be a function defined on the domain $x\in [(c+d)/2,c]$. We have 

\begin{equation}
f'(x)=px^{p-1} - p(c+d-x)^{p-1} \; .
\end{equation}

\noindent and 

\begin{equation}
f''(x)=(p-1)px^{p-2} + (p-1)p(c+d-x)^{p-2}  \; .
\end{equation}

\noindent Setting $f'(x)=0$ yields

\begin{align}
&px^{p-1} - p(c+d-x)^{p-1} =0 \\
&\Leftrightarrow px^{p-1} = p(c+d-x)^{p-1} \\
&\Leftrightarrow x = c+d-x \\
&\Leftrightarrow x=(c+d)/2 \; .
\end{align}

The function $f(x)$ thus has a single stationary point at $x=(c+d)/2$ which coincides with the lower bound of the function domain. Since 

\begin{equation}
f''((c+d)/2)=2(p-1)p((c+d)/2)^{p-2}>0
\end{equation}

\noindent this stationary point is a local minimum. Therefore, the maximum of $f(x)$ is attained at the upper bound of the domain $x=c$, and so $f(x)\leq f(c)=c^p+d^p$ on its domain.

Returning to Eq.~\ref{eq:checkpoint}, we now have

\begin{align}
a^p+b^p- c^p-d^p &= a^p+(c+d-a)^p - c^p-d^p \\
&=f(a) -c^p -d^p \\
&\leq c^p + d^p-c^p -d^p\\
&=0 \; . 
\end{align}

\noindent This concludes the proof. 

\end{proof}

\begin{theorem}
Let $\phi$ be a submodular pairwise term. If $\max(\phi(0,0), \phi(1,1))\leq \max(\phi(1,0), \phi(0,1))$, then $\phi^p$ is also submodular, for any real $p\geq 1$. 
\end{theorem}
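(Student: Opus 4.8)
The plan is to reduce the theorem directly to Lemma~\ref{lemma:mainresult}. First I would dispose of the boundary case $p=1$: here $\phi^1 = \phi$, so the conclusion is exactly the hypothesis that $\phi$ is submodular, and nothing further is needed. For the remainder I would assume $p>1$, which is precisely the range in which the lemma applies.

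Next I would set $a = \phi(0,0)$, $b = \phi(1,1)$, $c = \phi(1,0)$, and $d = \phi(0,1)$. The plan is then to check, one by one, that these four numbers satisfy the three hypotheses of Lemma~\ref{lemma:mainresult}. Non-negativity, $a,b,c,d \geq 0$, is immediate from the standing assumption that all pairwise terms are non-negative. The inequality $a+b \leq c+d$ is, written out, $\phi(0,0) + \phi(1,1) \leq \phi(1,0) + \phi(0,1)$, which is exactly the submodularity of $\phi$. Finally, $\max(a,b) \leq \max(c,d)$ is literally the extra hypothesis $\max(\phi(0,0),\phi(1,1)) \leq \max(\phi(1,0),\phi(0,1))$ assumed in the theorem.

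With all hypotheses verified, Lemma~\ref{lemma:mainresult} yields $a^p + b^p \leq c^p + d^p$, i.e.
\begin{equation}
\phi^p(0,0) + \phi^p(1,1) \leq \phi^p(1,0) + \phi^p(0,1) \; ,
\end{equation}
which is precisely the statement that $\phi^p$ is submodular. Combining this with the trivial case $p=1$ completes the argument for all real $p \geq 1$.

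Since the lemma does all of the analytic work, there is no serious obstacle in the theorem itself; the only points requiring a little care are remembering to treat $p=1$ separately (the lemma is stated only for $p>1$), and making sure the identification of $a,b,c,d$ with the four values of $\phi$ is set up so that the ``diagonal'' configurations $(0,0)$ and $(1,1)$ play the role of $a,b$ and the ``off-diagonal'' configurations play the role of $c,d$ — getting this pairing backwards would misalign the hypotheses.
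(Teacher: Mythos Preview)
Your proposal is correct and follows exactly the same route as the paper: set $a=\phi(0,0)$, $b=\phi(1,1)$, $c=\phi(1,0)$, $d=\phi(0,1)$ and invoke Lemma~\ref{lemma:mainresult}. If anything, you are slightly more careful than the paper, which does not explicitly separate out the $p=1$ case even though the lemma is stated only for $p>1$.
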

\begin{proof}
Taking $a=\phi(0,0)$, $b=\phi(1,1)$, $c=\phi(1,0)$ and $d=\phi(0,1)$, the theorem follows directly from Lemma~\ref{lemma:mainresult}.
\end{proof}

\section{Conclusions}
We have presented a condition under which a pairwise term $\phi^p$ is submodular for all $p\geq 1$, thereby identifying a large class of $l_p$-norm objective functions that can be minimized via minimal graph cuts. The conditions derived here are easy to verify for a given set of pairwise terms, and thus make it easier to apply minimal graph cuts for solving  labeling problems with $l_p$-norm objective functions, without having to explicitly prove the submodularity of the pairwise terms for each specific $p$.

It should be noted that even when there are non-submodular pairwise terms, graph cut techniques may still be used to find approximate solutions~\cite{kolmogorov2007minimizing}. Nevertheless, submodularity remains an important property for determining the feasibility of optimizing labeling problems via minimal graph cuts. 

% ---- Bibliography ----
%
% BibTeX users should specify bibliography style 'splncs04'.
% References will then be sorted and formatted in the correct style.
%
\bibliographystyle{splncs03}
\bibliography{refs}

\end{document}